\DeclareMathAlphabet{\bit}{OML}{cmm}{b}{it}
\newtheorem{thm}{Theorem}
\def\<{\leqslant}           
\def\>{\geqslant}           
\def\d{\partial}
\def\Re{\mathrm{Re}}   
\def\mZ{\mathbb{Z}}    
\def\mR{\mathbb{R}}    
\def\mC{\mathbb{C}}    
\def\rT{\mathrm{T}}        
\def\[[[{[\![\![}
\def\]]]{]\!]\!]}
\def\bra{{\langle}}
\def\ket{{\rangle}}
\def\dbra{\langle\!\!\langle}
\def\dket{\rangle\!\!\rangle}
\def\re{\mathrm{e}}        
\def\rd{\mathrm{d}}        
\def\br{\mathbf{r}}
\def\x{\times}
\def\ox{\otimes}
\def\fS{\mathfrak{S}}
\def\sK{\mathsf{K}}
\def\sL{\mathsf{L}}
\def\sH{\mathsf{H}}
\def\sA{\mathsf{A}}
\def\sB{\mathsf{B}}
\def\sC{\mathsf{C}}
\def\sD{\mathsf{D}}
\def\sX{\mathsf{X}}
\def\sY{\mathsf{Y}}
\def\sU{\mathsf{U}}
\def\sV{\mathsf{V}}
\def\sG{\mathsf{G}}
\def\fX{\mathfrak{X}}
\def\fU{\mathfrak{U}}
\def\sB{{\sf B}}
\def\sC{{\sf C}}
\def\fY{\mathfrak{Y}}
\def\mT{\mathbb{T}}
\def\mZ{\mathbb{Z}}
\def\sn{|\!|\!|}
\def\esssup{\mathop{\mathrm{ess\, sup}}}    
\def\essinf{\mathop{\mathrm{ess\, inf}}}    
\title{\LARGE \bf
Spatio-temporal Transfer Function Conditions of Positive Realness for Translation Invariant Lattice Networks of Interacting Linear
Systems$^*$}
\author{
Igor G. Vladimirov$^{\dagger}$,
\qquad
Ian R. Petersen$^{\dagger}$\\%
\thanks{$^*$This work is supported by the Australian Research Council under grant DP160101121.}
\thanks{$^\dagger$Research School of Engineering, College of Engineering and Computer Science, Australian National University, ACT 2601, Canberra, Australia.
{\tt igor.g.vladimirov@gmail.com}, {\tt i.r.petersen@gmail.com}.
}
}
\begin{document}
\maketitle
\thispagestyle{empty}

\begin{abstract}
This paper is concerned with networks of interacting linear systems at sites of a multidimensional lattice. The systems are governed by linear ODEs with constant coefficients driven by external inputs, and their internal dynamics and coupling with the other component systems are translation invariant. Such systems occur, for example, in  finite-difference models of large-scale flexible structures manufactured from homogeneous materials. Using the spatio-temporal transfer function of this translation invariant network, we establish conditions for its positive realness in the sense of energy dissipation. The latter is formulated in terms of block Toeplitz bilinear forms of the input and output variables of the composite system. We also discuss quadratic stability of the network in isolation from the environment
and phonon theoretic dispersion relations.
%
\end{abstract}


\begin{keywords}
Translation invariant networks,
spatio-temporal frequency domain,
energy balance relations. 

\emph{MSC Codes} ---
93C05,   	
90B10,   	
37L60,      
37K05,      
93C80.   	
\end{keywords}


\section{INTRODUCTION}

Complex physical systems can be viewed as a large number of relatively simple subsystems whose collective behaviour is a cumulative effect of their interaction rather than a particular individual  structure. Spatially homogeneous states of matter are modelled as identical building blocks which interact with each other in a translationally invariant fashion. A natural example of large-scale composite systems with translational symmetry is provided by crystalline solids, where spatially periodic arrangements of constituent particles result from their interaction and play an important role in their thermodynamic and mechanical properties (including the heat transfer and wave propagation) studied in the phonon theory  \cite{S_1990}.

Modern engineering exploits
translation  invariant interconnections  in drone swarming, vehicle  platooning and artificially fabricated metamaterials   \cite{VBSH_2006}, such as split ring resonator arrays with unusual electrodynamic characteristics (a negative refraction index). Nontrivial input-output properties of such networks of systems (natural or artificial) are not merely a ``sum'' of individual internal dynamics of their constituent blocks  and come from a specific structure of energy flows through the translation invariant interaction.

The energy balance relations,  which reflect the energy conservation and dissipation in isolated and open systems (for example, due to electrical resistance and mechanical friction), significantly affect the behaviour
of physical systems and play an increasingly important role in control design \cite{OVMM_2001,OVME_2002,VJ_2014}. These equations involve the internal energy and the work done on the system (which are represented in the dissipativity theory \cite{W_1972} in terms of storage and  supply rate functions). Work is modelled by using a bilinear form of the input and output variables, which are interpreted as the generalised force and velocity  respectively. For linear time-invariant systems with a finite-dimensional internal state,  the properties of being passive, positive real or negative imaginary (in the case of position variables instead of the velocity as the output) admit criteria in the form of linear matrix inequalities for the transfer functions in the frequency domain or the state-space matrices themselves \cite{PL_2010,XPL_2010}.

The present paper is concerned with similar conditions for networks of interacting linear systems at sites of a multidimensional lattice. The composite system is governed by an infinite set of linear ODEs with constant coefficients driven by external inputs, and their internal dynamics and coupling with the other component systems are translation invariant. These ODEs have block Toeplitz state-space matrices and can be represented in the spatio-temporal frequency domain by using appropriately modified transfer functions of several variables.
Such systems arise, for example, as finite-difference approximations of PDEs for large-scale flexible structures made of spatially homogeneous materials. Using the spatio-temporal transfer function of this translation invariant network, we establish conditions for its positive realness in the sense of energy dissipation. The latter is formulated in terms of block Toeplitz bilinear forms of the input and output variables of the composite system. The multivariate  Laplace and Fourier  transform techniques, which are used for this purpose, are similar to those for distributed control systems in the classical and quantum settings \cite{SVP_2015,VP_2014}.

The paper is organised as follows.
Section~\ref{sec:sys} describes the class of translation invariant networks under consideration.
Section~\ref{sec:freq} represents the network dynamics in terms of the spatial Fourier transforms of its signals.
Section~\ref{sec:bal} discusses energy balance relations in the case of bilinear supply rate and quadratic storage functions.
Section~\ref{sec:pass} establishes conditions for passivity of the network in terms of its spatio-temporal transfer function, and also discusses quadratic stability bounds for a dissipative network in isolation from the environment.
Section~\ref{sec:phon} considers phonon theoretic dispersion relations for the isolated network.
Section~\ref{sec:conc} provides concluding remarks.

\section{TRANSLATION INVARIANT NETWORKS
}\label{sec:sys}

We consider a translation invariant network of coupled linear systems at sites of the $\nu$-dimensional integer lattice $\mZ^{\nu}$. For any spatial index $k \in \mZ^{\nu}$, the $k$th system is endowed with an $\mR^n$-valued  vector $x_k$ of time-varying state variables (for example, the generalised positions and velocities). Associated with the $k$th lattice site are vectors $u_k$ and $y_k$  of external input and output variables which take values in $\mR^m$ and $\mR^r$, respectively, and also vary in time (unless indicated otherwise, vectors are organised as columns).
The states and outputs of these systems are governed by an infinite  set of coupled ODEs
\begin{align}
\label{xj}
  \dot{x}_j
  & =
  \sum_{k \in \mZ^\nu}
  (A_{j-k} x_k + B_{j-k} u_k),\\
\label{yj}
  y_j
  & =
  \sum_{k\in \mZ^\nu}
  (C_{j-k} x_k + D_{j-k} u_k)
\end{align}
for all $j\in \mZ^\nu$, where
$\dot{(\,)}$ is the time derivative (the time arguments are often omitted for brevity). Their right-hand sides are organised as convolutions, with the matrices $A_\ell \in \mR^{n\x n}$, $B_\ell\in \mR^{n\x m}$, $C_\ell\in \mR^{r\x n}$, $D_\ell \in \mR^{r\x m}$ depending on the relative location $\ell \in \mZ^\nu$ of the lattice sites in accordance with the translation invariance of the individual dynamics  of the systems and their coupling.
By assembling the inputs, states and outputs into the infinite-dimensional vectors $u:= (u_k)_{k \in \mZ^\nu}$, $x:= (x_k)_{k \in \mZ^\nu}$, $y:= (y_k)_{k \in \mZ^\nu}$, the set of ODEs (\ref{xj}) and (\ref{yj}) can be written as
\begin{align}
\label{x}
  \dot{x}
  & =
  A x + B u,\\
\label{y}
  y
  & =
  Cx+Du,
\end{align}
where the matrices $A:= (A_{j-k})_{j,k \in \mZ^\nu}$, $B:= (B_{j-k})_{j,k \in \mZ^\nu}$, $C:= (C_{j-k})_{j,k \in \mZ^\nu}$,  $D:= (D_{j-k})_{j,k \in \mZ^\nu}$ are block Toeplitz in the sense of the additive group structure of the lattice $\mZ^\nu$. The ODE (\ref{x}) is understood as a Volterra integral equation (of the second kind)
\begin{equation}
\label{xint}
    x(t)
    =
    \int_0^t
    (A x(\tau) + B u(\tau))
    \rd \tau
\end{equation}
whose solution is given by
\begin{equation}
\label{xsol}
    x(t)
    =
    \re^{tA} x(0) + \int_0^t \re^{(t-\tau) A} B u(\tau)\rd \tau,
    \qquad
    t\>0.
\end{equation}
For completeness, we note that (\ref{x})--(\ref{xsol}) can be endowed with a rigorous meaning as follows. To this end,
the state-space matrices in (\ref{xj}) and (\ref{yj}) are assumed to be square summable:
\begin{equation}
\label{ABCDgood}
    \sum_{\ell\in \mZ^\nu}
    \left\|
  \begin{bmatrix}
    A_\ell & B_\ell\\
    C_\ell & D_\ell
  \end{bmatrix}
    \right\|^2
    <+\infty,
\end{equation}
where $\|\cdot\|$ is an arbitrary matrix norm (whose particular choice is irrelevant in this case).  The fulfillment of (\ref{ABCDgood}) allows the Fourier series
\begin{equation}
\label{ABCD}
  \begin{bmatrix}
    \sA(\sigma) & \sB(\sigma)\\
    \sC(\sigma) & \sD(\sigma)
  \end{bmatrix}
  :=
  \sum_{\ell\in \mZ^\nu}
  \re^{-i\ell^{\rT}\sigma}
  \begin{bmatrix}
    A_\ell & B_\ell\\
    C_\ell & D_\ell
  \end{bmatrix},
\end{equation}
to be defined (in a blockwise fashion)
for almost all $\sigma \in \mR^\nu$ as the $L^2$-limit of appropriate partial sums over finite subsets of the lattice forming an exhausting sequence. Since the matrix-valued  functions $\sA$, $\sB$, $\sC$, $\sD$ (with values in $\mC^{n\x n}$, $\mC^{n\x m}$, $\mC^{r\x n}$, $\mC^{r\x m}$, respectively)
are $2\pi$-periodic with respect to each of their arguments, then, without loss of generality,  they can be considered on a $\nu$-dimensional torus $\mT^\nu$ (where $\mT$ can be  identified with the interval $[-\pi,\pi)$). These functions are Hermitian in the sense that $\overline{\sA(\sigma)} = \sA(-\sigma)$ (and similarly for the other functions $\sB$, $\sC$, $\sD$) for all $\sigma \in \mR^\nu$, where $\overline{(\cdot )}$ is the complex conjugate.  Furthermore, let
$
    \|u\|_2
    :=
    \sqrt{\sum_{k \in \mZ^\nu} |u_k|^2}$,
$
    \|x\|_2
    := \sqrt{\sum_{k \in \mZ^\nu} |x_k|^2}$,
$\|y\|_2
    :=
    \sqrt{\sum_{k \in \mZ^\nu} |y_k|^2}
$
denote the norms of the input, state and output of the network at a fixed but otherwise arbitrary moment of time $t\>0$ in the corresponding Hilbert spaces
\begin{equation}
\label{fU_fX_fY}
    \fU
    := L^2(\mZ^\nu, \mR^m),
    \
     \fX
     :=
     L^2(\mZ^\nu,\mR^n),
     \
     \fY
     :=
     L^2(\mZ^\nu, \mR^r)
\end{equation}
of appropriately dimensioned square summable real vector-valued functions $f:= (f_k)_{k \in \mZ^\nu}$ and $g:=(g_k)_{k \in \mZ^\nu}$ on the lattice $\mZ^\nu$ with the inner product $\bra f, g\ket := \sum_{k\in \mZ^\nu} f_k^\rT g_k$. Now, suppose the state-space matrices in (\ref{x}) and (\ref{y})  specify bounded linear operators $A: \fX \to \fX$, $B:\fU\to \fX$, $C:\fX\to \fY$, $D:\fU\to \fY$. This is equivalent to their $L^2$-induced operator norms being finite:
\begin{align}
\label{Anorm}
    \|A\|_{\infty}
    & =
    \esssup_{\sigma \in \mT^\nu}
    \|\sA(\sigma)\|
    <+\infty,\\
\label{Bnorm}
    \|B\|_{\infty}
    & =
    \esssup_{\sigma \in \mT^\nu}
    \|\sB(\sigma)\|
    <+\infty,\\
\label{Cnorm}
    \|C\|_{\infty}
    & =
    \esssup_{\sigma \in \mT^\nu}
    \|\sC(\sigma)\|
    <+\infty,\\
\label{Dnorm}
    \|D\|_{\infty}
    & =
    \esssup_{\sigma \in \mT^\nu}
    \|\sD(\sigma)\|
    <+\infty,
\end{align}
where the essential supremum is applied to the operator  norms of the matrices in (\ref{ABCD}).
In particular, if the matrices  $A_\ell$, $B_\ell$, $C_\ell$, $D_\ell$  vanish for all $\ell \in \mZ^\nu$ with $|\ell|$ large enough (so that each of the component systems in (\ref{xj}) and (\ref{yj}) is coupled with a finite number of the other systems in the network), then the functions $\sA$, $\sB$, $\sC$, $\sD$ in (\ref{ABCD}) are multivariate trigonometric polynomials, and the conditions (\ref{Anorm})--(\ref{Dnorm}) are satisfied in this case.  
In general, the fulfillment of (\ref{Anorm}) and (\ref{Bnorm}) guarantees that the operator exponential $\re^{tA}$ and the product $\re^{(t-\tau)A}B$ in (\ref{xsol}) are bounded block Toeplitz   operators acting on the Hilbert space $\fX$ and from $\fU$ to $\fX$, respectively, with
\begin{equation}
\label{bounds}
    \|\re^{tA}\|_{\infty} \< \re^{\|A\|_{\infty} t},
    \qquad
    \|\re^{(t-\tau)A} B\|_{\infty} \< \re^{\|A\|_{\infty}(t-\tau)}\|B\|_{\infty}
\end{equation}
for any $t\> \tau \> 0$, in view of
the submultiplicativity of the operator norm and the fact that block Toeplitz matrices form an algebra.
Therefore, if the initial network state is square summable, that is,
\begin{equation}
\label{x0norm}
    \|x(0)\|_2 < +\infty,
\end{equation}
and the network input is locally absolutely integrable with respect to time in the sense that
\begin{equation}
\label{ugood}
    \int_0^T
    \|u(t)\|_2
    \rd t
    <+\infty,
    \qquad
    T>0,
\end{equation}
then these properties are inherited by the subsequent states of the network. Indeed, a combination of (\ref{xsol}) with (\ref{bounds})--(\ref{ugood}) leads to
 \begin{equation}
\label{xgood}
    \|x(t)\|_2
    \<
    \re^{\|A\|_{\infty}t}
    \Big(
        \|x(0)\|_2
        +
    \|B\|_{\infty}
    \int_0^t
    \|u(\tau)\|_2
    \rd \tau
    \Big)
    <+\infty
\end{equation}
for all $t\>0$, and hence,
 \begin{align}
\nonumber
    \int_0^T
    \|x(t)\|_2
    \rd t
    \< &
    T
    \re^{\|A\|_{\infty}T}
    \Big(
        \|x(0)\|_2\\
\label{xgood1}
        & +
    \|B\|_{\infty}
    \int_0^T
    \|u(\tau)\|_2
    \rd \tau
    \Big)
    <+\infty
\end{align}
for any $T>0$.
Together with (\ref{y}),  (\ref{Cnorm}), (\ref{Dnorm}),  the property (\ref{xgood}) implies that the output norm $\|y(t)\|_2$ is finite for almost all $t>0$ since so is $\|u(t)\|_2$ in view of (\ref{ugood}). Furthermore, due to (\ref{xgood1}), the network output is also locally absolutely integrable with respect to time:
\begin{align}
\nonumber
    \int_0^T
    \|y(t)\|_2
    \rd t
    \< &
    \|C\|_{\infty}
    \int_0^T
    \|x(t)\|_2
    \rd t\\
\label{ygood}
    & +
    \|D\|_{\infty}
    \int_0^T
    \|u(t)\|_2
    \rd t
    <+\infty
\end{align}
for any $T>0$. In Section~\ref{sec:bal}, we will replace (\ref{ugood}) with a stronger condition on the network inputs in order to guarantee time-local square integrability for the outputs instead of (\ref{ygood}).

\section{NETWORK DYNAMICS IN THE SPATIAL FREQUENCY DOMAIN}\label{sec:freq}

The preservation of the spatial square summability (\ref{xgood}) (provided the input satisfies (\ref{ugood})) allows the network dynamics (\ref{x}) and (\ref{y}) to be represented in the spatial frequency domain as
\begin{align}
\label{X}
  \d_t X(t,\sigma)
  & =
  \sA(\sigma) X(t,\sigma) + \sB(\sigma) U(t,\sigma),\\
\label{Y}
  Y(t,\sigma)
  & =
  \sC(\sigma) X(t,\sigma)+\sD(\sigma) U(t,\sigma).
\end{align}
Here, similarly to (\ref{ABCD}), the functions $U$, $X$, $Y$ on $\mR_+\x \mT^\nu$ with values in $\mC^m$, $\mC^n$, $\mC^r$ are the Fourier transforms of the input, state and output of the network:
\begin{align}
\label{uDFT}
    U(t,\sigma)
    & :=
    \sum_{\ell\in \mZ^\nu}
    \re^{-i\ell^{\rT}\sigma}
    u_\ell(t),\\
\label{xDFT}
    X(t,\sigma)
    & :=
    \sum_{\ell\in \mZ^\nu}
    \re^{-i\ell^{\rT}\sigma}
    x_\ell(t),\\
\label{yDFT}
    Y(t,\sigma)
    & :=
    \sum_{\ell\in \mZ^\nu}
    \re^{-i\ell^{\rT}\sigma}
    y_\ell(t)
\end{align}
for almost all $\sigma \in \mT^\nu$ at time $t\> 0$. For any given $\sigma$, the equations (\ref{X}) and (\ref{Y}) describe an autonomous system (which is independent of the other systems in this parametric family with different values of $\sigma$) with a finite-dimensional internal state $X(\cdot, \sigma)$.  Their solution can be represented in terms of the Laplace transform over the time variable as
\begin{align}
\nonumber
  \sX(s,\sigma&)
   :=
  \int_{0}^{+\infty}
  \re^{-st}
  X(t,\sigma)
  \rd t\\
\label{cX}
    =&
    (sI_n - \sA(\sigma))^{-1} (X(0,\sigma)+ \sB(\sigma)\sU(s,\sigma)),\\
\nonumber
  \sY(s,\sigma&)
  :=
  \int_{0}^{+\infty}
  \re^{-st}
  Y(t,\sigma)
  \rd t\\
\nonumber
  = &
  \sC(\sigma) \sX(s,\sigma)+\sD(\sigma) \sU(s,\sigma)\\
\label{cY}
  = &
  \sC(\sigma)
  (sI_n - \sA(\sigma))^{-1} X(0,\sigma)
  +
  F(s,\sigma)\sU(s,\sigma),
\end{align}
where
\begin{equation}
\label{cU}
  \sU(s,\sigma)
   :=
  \int_{0}^{+\infty}
  \re^{-st}
  U(t,\sigma)
  \rd t.
\end{equation}
Here,
\begin{equation}
\label{F}
  F(s,\sigma)
  :=
  \sC(\sigma)(sI_n - \sA(\sigma))^{-1} \sB(\sigma) + \sD(\sigma)
\end{equation}
is the spatio-temporal transfer function of the network with values in $\mC^{r\x m}$. In view of (\ref{Anorm}), the relations (\ref{cX})--(\ref{F}) are valid for all $s\in \mC$ satisfying
\begin{equation}
\label{sgood}
  \Re s
  >
  \max
  \Big(
    \esssup_{\sigma \in \mT^\nu}
    \ln \br(\re^{\sA(\sigma)}),\,
  \limsup_{t\to +\infty}
    \frac{    \ln
    \|u(t)\|_2}{t}
  \Big),
\end{equation}
where $\br(\cdot)$ denotes the spectral radius of a square matrix (so that $\ln\br(\re^N) = \max \Re\fS(N)$
is the largest real part of the eigenvalues of $N$,
with $\fS(N)$ denoting its spectrum).
The presence of the upper limit in (\ref{sgood}) makes the integral in (\ref{cU}) convergent in the Hilbert space $L^2(\mT^\nu, \mC^m)$ due to the Parseval identity
$
    \|U(t,\cdot)\|_2
     :=
    \sqrt{
    \int_{\mT^\nu}
    |U(t,\sigma)|^2
    \rd \sigma}
    =
    (2\pi)^{\nu/2}
    \|u(t)\|_2
$
for any $t\>0$. In the case when the input vanishes, so that the network is effectively isolated from the environment,   (\ref{X}) reduces to a homogeneous linear ODE
$
    \d_t X(t,\sigma) = \sA(\sigma)X(t,\sigma)
$.
Its solution
$
    X(t,\sigma) = \re^{t\sA(\sigma)} X(0,\sigma)
$,
considered for different values of $\sigma \in \mT^\nu$,
admits a direct analogy (formulated in system theoretic terms) with the phonon theory \cite{S_1990} of collective oscillations  in
spatially periodic arrangements of atoms in
crystalline solids. More precisely, for a given $\sigma \in  \mT^\nu$, let $z \in \mC^n$ be an eigenvector of the matrix $\sA(\sigma)$ associated with its eigenvalue $s\in \mC$. 
Then the functions
\begin{equation}
\label{phon}
    x_k(t)
    =
    \Re(\re^{st + ik^{\rT} \sigma} z),
    \qquad
    k \in \mZ^\nu,\
    t\>0,
\end{equation}
satisfy the ODEs (\ref{xj}) with $u=0$. Indeed, substitution of the latter equality and (\ref{phon}) into the right-hand side of (\ref{xj}) yields
$
      \sum_{k \in \mZ^\nu}
  A_{j-k} x_k(t)
   =
  \Re
  \big(
  \re^{st + ij^{\rT} \sigma}   \sum_{k \in \mZ^\nu}\re^{-i(j-k)^{\rT} \sigma}A_{j-k}z
  \big)
  =
  \Re
  \big(
  \re^{st + ij^{\rT} \sigma}   \sA(\sigma)z
  \big)
  =
  \Re
  \big(
  s
  \re^{st + ij^{\rT} \sigma}   z
  \big) = \dot{x}_j(t)
$.
For purely imaginary eigenvalues $s = i\omega$ of the matrix $\sA(\sigma)$ (with $\omega \in \mR$), the solutions (\ref{phon}) describe persistent oscillations of frequency $\omega$ in the network, which are organised as (hyper) plane waves of length $\frac{2\pi}{|\sigma|}$ in $\mR^\nu$. Their wavefronts are orthogonal to the vector $\sigma$ and move (along $\sigma$) at constant phase velocity $-\frac{\omega}{|\sigma|}$, with the direction depending on the sign of $\omega$. Similarly to the dispersion relations of the phonon theory, the spectral structure of such oscillations  in the network is represented by the multi-valued map $\mT^\nu\ni \sigma\mapsto\omega \in \mR$ (of the wave vector to the frequency), which  will be discussed in Section~\ref{sec:phon}.

\section{SUPPLY RATE AND ENERGY BALANCE
}\label{sec:bal}

A generalised model for the work done by the input $u$ on the network is provided by a supply rate \cite{W_1972} at time $t\>0$ in the form
\begin{align}
\nonumber
    S(t)&:=
    \bra
        u(t),Gy(t)
    \ket
    =
    \sum_{j,k\in \mZ^\nu}
    u_j(t)^{\rT}
    G_{j-k}
    y_k(t)\\
\nonumber
    & =
    \frac{1}
    {(2\pi)^{\nu}}
    \int_{\mT^\nu}
    U(t,\sigma)^*
    \sG(\sigma)
    Y(t,\sigma)
    \rd \sigma\\
\label{S}
    & =
    \frac{1}
    {(2\pi)^{\nu}}
    \bra
        U(t,\cdot),
        \sG Y(t,\cdot)
    \ket,
\end{align}
where the Parseval identity is applied to the Fourier transforms $U$ and $Y$ from (\ref{uDFT}) and (\ref{yDFT}).
The quantity $S(t)$ is a bilinear function of the current input and output of the network
which is specified by a block Toeplitz matrix $G:= (G_{j-k})_{j,k\in \mZ^\nu}$, where $G_\ell \in \mR^{m\x r}$ are given matrices satisfying
\begin{equation}
\label{Ggood}
  \sum_{\ell \in \mZ^\nu}
  \|G_\ell\|^2 < +\infty,
\end{equation}
with the Fourier transform
\begin{equation}
\label{cG}
    \sG(\sigma)
    :=
    \sum_{\ell\in \mZ^\nu}
    \re^{-i\ell^{\rT}\sigma}
    G_\ell.
\end{equation}
It is assumed that the matrix $G$ describes a bounded linear operator from the output space $\fY$ in (\ref{fU_fX_fY}) to the input space $\fU$ in (\ref{fU_fX_fY}), so that
\begin{equation}
\label{Gnorm}
    \|G\|_{\infty}
    =
    \esssup_{\sigma \in \mT^\nu}
    \|\sG(\sigma)\|
    <+\infty,
\end{equation}
similarly to the operator norms in (\ref{Anorm})--(\ref{Dnorm}).
For example, the condition (\ref{Gnorm}) holds in the
case of a standard supply rate, when $u$ and $y$ consist of the corresponding force and velocity variables (with $r=m$) and $G$ is the identity operator. Returning to the general case, substitution of (\ref{Y}) into (\ref{S}) leads to
\begin{align}
\nonumber
    S(t)
     =&
    \frac{1}
    {(2\pi)^{\nu}}
    \int_{\mT^\nu}
    \Re
    \big(
    U(t,\sigma)^*
    \sG(\sigma)\\
\nonumber
    & \x
    (\sC(\sigma) X(t,\sigma)+\sD(\sigma) U(t,\sigma))
    \big)
    \rd \sigma\\
\nonumber
    = &
    \frac{1}
    {2(2\pi)^{\nu}}
    \int_{\mT^\nu}
    Z(t,\sigma)^*\\
\nonumber
    & \x
    {\begin{bmatrix}
        0 & \sC(\sigma)^* \sG(\sigma)^*\\
        \sG(\sigma) \sC(\sigma) & \sG(\sigma) \sD(\sigma) + \sD(\sigma)^* \sG(\sigma)^*
    \end{bmatrix}}\\
\label{W1}
    & \x
    Z(t,\sigma)
    \rd \sigma,
\end{align}
where
\begin{equation}
\label{Z}
    Z(t,\sigma)
    :=
      \begin{bmatrix}
        X(t,\sigma)\\
        U(t,\sigma)
    \end{bmatrix}
    =
    \sum_{k \in \mZ^\nu}
    \re^{-ik^{\rT}\sigma}
      \begin{bmatrix}
        x_k(t)\\
        u_k(t)
    \end{bmatrix}
\end{equation}
is the Fourier transform of the augmented state-input pair of the network at time $t\>0$.
Now, let $V:= (V_{j-k})_{j,k\in \mZ^\nu}$ be a symmetric block Toeplitz matrix, which is specified by a square summable matrix-valued function $\mZ^\nu \ni \ell\mapsto V_\ell \in \mR^{n\x n}$  on the lattice  satisfying $
    V_{-\ell} = V_\ell^{\rT}
$. Then the corresponding Fourier transform
\begin{equation}
\label{cV}
    \sV(\sigma)
    :=
    \sum_{\ell\in \mZ^\nu}
    \re^{-i\ell^{\rT}\sigma}
    V_\ell
\end{equation}
takes values in the subspace of complex Hermitian matrices of order $n$ and satisfies
\begin{equation}
\label{Vsymm}
  \sV(-\sigma)
  =
  \sV(\sigma)^{\rT}
  =
  \overline{\sV(\sigma)},
  \qquad
  \sigma \in \mT^\nu.
\end{equation}
Under the additional condition
\begin{equation}
\label{Vnorm}
    \|V\|_{\infty}
    =
    \esssup_{\sigma \in \mT^\nu}
    \|\sV(\sigma)\|
    <+\infty,
\end{equation}
the matrix $V$ describes a self-adjoint operator on the Hilbert space $\fX$ in (\ref{fU_fX_fY}). This gives rise to a quadratic form of the network state at time $t\>0$:
\begin{align}
\nonumber
    H(t)
    & :=
    \frac{1}{2}
    \bra
        x(t),
        Vx(t)
    \ket
    =
    \frac{1}{2}
    \sum_{j,k\in \mZ^\nu}
    x_j(t)^{\rT}
    V_{j-k}
    x_k(t)\\
\label{H}
    & = \frac{1}{(2\pi)^\nu}
    \int_{\mT^\nu}
    \sH(t,\sigma)
    \rd \sigma,
\end{align}
where
\begin{equation}
\label{Hsigma}
    \sH(t,\sigma)
    :=
    \frac{1}{2}
      X(t,\sigma)^*
    \sV(\sigma)
    X(t,\sigma).
\end{equation}
Here, the Parseval identity is applied to the Fourier transform $X$ from  (\ref{xDFT}), whose contribution to $H(t)$ at a given $\sigma \in \mT^\nu$ is quantified by the real-valued quantity $\sH(t,\sigma)$ in (\ref{Hsigma}).
The spatial frequency domain representation (\ref{X}) of the network state dynamics allows the time derivative of $H$ in (\ref{H}) to be computed as
\begin{equation}
\label{Hdot}
    \dot{H}(t)
    =
    \frac{1}{(2\pi)^\nu}
    \int_{\mT^\nu}
    \d_t \sH(t,\sigma)
    \rd \sigma,
\end{equation}
with
\begin{align}
\nonumber
    \d_t &\sH(t,\sigma)
    =
    \Re
    \big(
    X(t,\sigma)^*
    \sV(\sigma)
    \d_t X(t,\sigma)
    \big)    \\
\nonumber
    &=
    \Re
    \big(
    X(t,\sigma)^*
    \sV(\sigma)
    (\sA(\sigma) X(t,\sigma) + \sB(\sigma) U(t,\sigma))
    \big)    \\
\label{Hsigmadot}
    &=
    \frac{1}{2}
    Z(t,\sigma)^*
    {\small\begin{bmatrix}
        \sA(\sigma)^* \sV(\sigma) + \sV(\sigma) \sA(\sigma) & \sV(\sigma) \sB(\sigma)\\
        \sB(\sigma)^* \sV(\sigma) & 0
    \end{bmatrix}}
    Z(t,\sigma),
\end{align}
where use is made of (\ref{Z}).
If $H(t)$ in (\ref{H}) describes the internal energy (usually   referred to as the Hamiltonian) of the network at the current moment of time $t\>0$, then the difference
\begin{equation}
\label{diff}
    S(t) - \dot{H}(t)
    =
    \frac{1}
    {2(2\pi)^{\nu}}
    \int_{\mT^\nu}
    Z(t,\sigma)^*
    N(\sigma)
    Z(t,\sigma)
    \rd \sigma
\end{equation}
can be interpreted as the energy dissipation rate. This quantity is part of the
supply rate in (\ref{S}) which is not accounted for by the rate of change of the internal energy in (\ref{Hdot}).  The complex Hermitian matrix
\begin{align}
\nonumber
    N(\sigma)
    := &
      \left[{\begin{matrix}
        -\sA(\sigma)^* \sV(\sigma) - \sV(\sigma) \sA(\sigma)\\
        \sG(\sigma) \sC(\sigma)-\sB(\sigma)^* \sV(\sigma)
    \end{matrix}}\right.\\
\label{N}
      & \qquad\qquad\qquad
      \left.{\begin{matrix}
        \sC(\sigma)^* \sG(\sigma)^*-\sV(\sigma) \sB(\sigma)\\
        \sG(\sigma) \sD(\sigma) + \sD(\sigma)^* \sG(\sigma)^*
    \end{matrix}}
    \right]
\end{align}
of order $n+m$ in (\ref{diff}) is obtained by using (\ref{W1}), (\ref{Hdot}) and (\ref{Hsigmadot}),
and satisfies
\begin{equation}
\label{Nsymm}
  N(-\sigma)
  =
  N(\sigma)^{\rT}
  =
  \overline{N(\sigma)},
  \qquad
  \sigma \in \mT^\nu,
\end{equation}
similarly to (\ref{Vsymm}).
In order for the network model to correspond to a real physical system, the dissipation rate in (\ref{diff}) has to be nonnegative in view of the total energy conservation, and hence,
\begin{equation}
\label{diss}
    S(t) \> \dot{H}(t).
\end{equation}
A sufficient condition for this inequality to hold  for arbitrary admissible inputs $u$ and  initial network states $x(0)$ is positive semi-definiteness of the matrix (\ref{N}):
\begin{equation}
\label{Npos}
  N(\sigma) \succcurlyeq 0,
  \qquad
  {\rm almost\ all}\
  \sigma \in \mT^\nu.
\end{equation}
This condition is also necessary under an additional controllability assumption.
\begin{thm}
\label{th:Npos}
Suppose the matrix pair $(\sA(\sigma), \sB(\sigma))$ for the network in (\ref{xj}) and (\ref{yj}), given by (\ref{ABCD}), is controllable for any $\sigma\in \mT^\nu$. Then the dissipation rate in (\ref{diff}) is nonnegative for any square summable initial network states and locally integrable inputs in (\ref{ugood}) if and only if the matrix (\ref{N}) satisfies (\ref{Npos}).\hfill$\square$
\end{thm}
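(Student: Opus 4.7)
\emph{Sufficiency} is immediate from (\ref{diff}): if $N(\sigma) \succcurlyeq 0$ for almost every $\sigma \in \mT^\nu$, the integrand $Z(t,\sigma)^* N(\sigma) Z(t,\sigma)$ is pointwise nonnegative in $\sigma$, whence $S(t) \geq \dot{H}(t)$ for every $t \geq 0$ and every admissible trajectory.

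For \emph{necessity}, I would argue by contraposition. Suppose the set $E_0 \subseteq \mT^\nu$ on which $N(\sigma)$ fails to be positive semi-definite has positive Lebesgue measure. A standard measurable selection of the eigenvector corresponding to the smallest eigenvalue of the Hermitian family $N(\sigma)$ then yields a measurable unit-vector function $\zeta(\sigma) = [\xi(\sigma)^{\rT}, \eta(\sigma)^{\rT}]^{\rT}$ (with $\xi(\sigma) \in \mC^n$, $\eta(\sigma) \in \mC^m$), a measurable subset $E \subseteq E_0$ of positive measure, and a constant $\epsilon > 0$ with $\zeta(\sigma)^* N(\sigma) \zeta(\sigma) \leq -\epsilon$ on $E$. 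Using the Hermitian symmetry (\ref{Nsymm}), I symmetrise $E$ so that $E = -E$ and set $\zeta(-\sigma) := \overline{\zeta(\sigma)}$, which will guarantee that the resulting input corresponds to a real lattice signal. The target is to exhibit a square summable initial state $x(0) \in \fX$ and a locally integrable input $u$ whose spatial Fourier transform at some time $t_0 > 0$ satisfies $Z(t_0, \sigma) \approx \zeta(\sigma)\bone_E(\sigma)$ in $L^2(\mT^\nu, \mC^{n+m})$; substitution into (\ref{diff}) would then give $S(t_0) - \dot{H}(t_0) \leq -\epsilon |E|/(2(2\pi)^\nu) + o(1) < 0$, contradicting the hypothesis.

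To synthesise such a Fourier profile, fix small parameters $\tau, \delta > 0$ and take $x(0) := 0$. For each $\sigma$ the pointwise controllability assumption makes the finite-horizon controllability Gramian $W(\sigma) := \int_0^\tau \re^{s\sA(\sigma)} \sB(\sigma) \sB(\sigma)^* \re^{s\sA(\sigma)^*} \rd s$ positive definite, and $\sigma \mapsto W(\sigma)^{-1}$ is measurable. By inner regularity of Lebesgue measure together with (\ref{Anorm}), (\ref{Bnorm}), I replace $E$ by a symmetrically-shrunk measurable subset $E'$ of positive measure on which $\|W(\sigma)^{-1}\|$, $|\xi(\sigma)|$ and $|\eta(\sigma)|$ are all uniformly bounded. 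Define the spatial-frequency-domain input by the Gramian-based steering law $U(t, \sigma) := \sB(\sigma)^* \re^{(\tau - t)\sA(\sigma)^*} W(\sigma)^{-1} \xi(\sigma) \bone_{E'}(\sigma)$ for $t \in [0, \tau]$, extended by $U(t, \sigma) := \eta(\sigma) \bone_{E'}(\sigma)$ for $t \in (\tau, \tau + \delta]$. The first phase steers the state to $X(\tau, \sigma) = \xi(\sigma)\bone_{E'}(\sigma)$, and the uniform bounds on $E'$ together with (\ref{Anorm}), (\ref{Bnorm}) give $X(\tau + \delta, \sigma) = \xi(\sigma)\bone_{E'}(\sigma) + O(\delta)$ in $L^2(\mT^\nu)$. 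Setting $t_0 := \tau + \delta$ and letting $\delta \to 0$ produces $Z(t_0, \sigma) \to \zeta(\sigma)\bone_{E'}(\sigma)$ in $L^2(\mT^\nu)$, yielding the required strict negativity at $t_0$.

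The main technical obstacle is verifying that this construction delivers an admissible lattice input. The uniform bounds on $E'$ make $\|U(t, \cdot)\|_{L^2(\mT^\nu)}$ essentially bounded in $t \in [0, t_0]$, which by the Parseval identity translates to $\|u(t)\|_2$ essentially bounded in $t$ and hence integrable on $[0, t_0]$ as required by (\ref{ugood}); the choice $x(0) = 0$ trivially obeys (\ref{x0norm}). The imposed conjugate symmetry of $\zeta$ under $\sigma \mapsto -\sigma$ combined with the Hermitian symmetry of $\sA, \sB$ from (\ref{ABCD}) makes $U(t, \cdot)$ conjugate-symmetric, so that its inverse Fourier transform $u(t)$ is real-valued on the lattice. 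Measurable eigenvector selection for Hermitian-matrix-valued measurable functions is classical; the remainder is an adaptation of the standard finite-dimensional contradiction argument for the positive real property to the parametric family of finite-dimensional systems indexed by $\sigma \in \mT^\nu$.
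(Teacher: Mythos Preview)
Your proof is correct and follows the same strategy as the paper's: contraposition, measurable selection of a unit eigenvector $\zeta(\sigma)$ for the minimal eigenvalue of $N(\sigma)$, and use of controllability to steer $Z(t,\cdot)$ to $\zeta$ on the set where $\lambda_{\min}(N(\sigma))<0$, so that the integral in (\ref{diff}) becomes strictly negative. Your version is more detailed --- explicitly constructing the steering input via the controllability Gramian, imposing conjugate symmetry for real-valuedness of the lattice input, restricting to a subset $E'$ where $\|W(\sigma)^{-1}\|$ is uniformly bounded so that admissibility (\ref{ugood}) holds, and using a two-phase input with a $\delta\to 0$ approximation to prescribe both the state and input components of $Z$ at the same instant --- whereas the paper simply asserts the existence of an admissible input achieving $Z(t,\sigma)=\zeta(\sigma)\bone_{\{\lambda<0\}}(\sigma)$ exactly and then evaluates (\ref{diff}) as $\frac{1}{2(2\pi)^\nu}\int_{\mT^\nu}\min(\lambda(\sigma),0)\,\rd\sigma$.
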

\begin{proof}
In view of (\ref{diff}), the fulfillment of (\ref{Npos}) implies (\ref{diss}) regardless of the controllability of $(\sA,\sB)$, thus proving the sufficiency of (\ref{Npos}), mentioned above. In order to show the necessity, let $\zeta: \mT^\nu\to \mC^{n+m}$ be a measurable  function which is Hermitian (that is, $\overline{\zeta(\sigma)} = \zeta(-\sigma)$ for any $\sigma\in \mT^\nu$) and such that $\zeta(\sigma)$ is a unit eigenvector
associated with the smallest eigenvalue
\begin{equation}
\label{Nmin}
    \lambda(\sigma)
    :=
    \lambda_{\min}(N(\sigma))
\end{equation}
of the Hermitian  matrix $N(\sigma)$ in (\ref{N}), so that, in accordance with (\ref{Nsymm}),
\begin{equation}
\label{zeta}
  N(\sigma)\zeta(\sigma) = \lambda(\sigma)\zeta(\sigma).
\end{equation}
Note that, due to (\ref{Nsymm}), the function $\lambda$ in (\ref{Nmin}) is symmetric.
Therefore, if the above controllability condition is satisfied, then for any time horizon $t>0$, there exists an admissible  network input $u$ on the time interval $[0,t]$ such that the vector (\ref{Z}) satisfies
\begin{equation}
\label{Zmin}
    Z(t,\sigma)
    =
    \left\{
    \begin{matrix}
    \zeta(\sigma) & {\rm if}\ \lambda(\sigma)<0\\
    0 & {\rm otherwise}
    \end{matrix}
    \right.
\end{equation}
for any $\sigma \in \mT^\nu$.
For such an input, in view of (\ref{Nmin})--(\ref{Zmin}), the energy dissipation rate (\ref{diff}) takes the form
\begin{equation}
\label{diffmin}
    S(t) - \dot{H}(t)
    =
    \frac{1}
    {2(2\pi)^{\nu}}
    \int_{\mT^\nu}
    \min(\lambda(\sigma),\, 0)
    \rd \sigma.
\end{equation}
If the property (\ref{Npos}) does not hold, then the set $\{\sigma \in \mT^\nu:\ \lambda(\sigma)<0\}$ is of positive Lebesgue measure, and the right-hand side of (\ref{diffmin})  is negative, thus contradicting (\ref{diss}). This proves that, under the controllability assumption on the pair $(\sA,\sB)$, the condition (\ref{Npos}) is necessary for the network dissipativity.
%
%
\end{proof}

If the self-adjoint operator $V$ is positive semi-definite (which is equivalent to that the corresponding Fourier transform $\sV$ in (\ref{cV}) satisfies $\sV(\sigma) \succcurlyeq 0$ for almost all $\sigma \in \mT^\nu$), and the network is initialized at zero state $x(0)=0$, then the integration of both parts of (\ref{diss}) over a time interval $[0,T]$ leads to
\begin{equation}
\label{Wintpos}
  \int_0^T S(t)\rd t
  \>
  H(T)-H(0) = H(T)\> 0
\end{equation}
for any $T>0$.
These relations hold for any network input $u$ such that $\|u(t)\|_2$ is a locally square integrable function of time $t\>0$ in the sense that
\begin{equation}
\label{ugoodT}
    \int_0^T
    \|u(t)\|_2^2
    \rd t
    =
    \int_0^T
    \sum_{k\in \mZ^\nu}
    |u_k(t)|^2
    \rd t
    <+\infty,
    \qquad
    T>0.
\end{equation}
The condition (\ref{ugoodT}) is stronger than (\ref{ugood}) and guarantees finiteness of the work by such an input on the network over any bounded  time interval. More precisely, (\ref{S}) implies that
\begin{align}
\nonumber
  \Big(\int_0^T
  |S(t)|
  \rd t
  \Big)^2
  & \<
  \int_0^T
  \|u(t)\|_2^2\rd t
  \int_0^T
  \|Gy(t)\|_2^2\rd t  \\
\label{Wgood}
  & \<
  \|G\|_{\infty}^2
  \int_0^T
  \|u(t)\|_2^2\rd t
  \int_0^T
  \|y(t)\|_2^2\rd t
\end{align}
in view of the Cauchy-Bunyakovsky-Schwarz inequality and the boundedness (\ref{Gnorm}) of the operator $G$. Here, the time-local square integrability of the input allows (\ref{ygood}) to be enhanced as
\begin{align}
\nonumber
    \frac{1}{2}
  \int_0^T&
  \|y(t)\|_2^2\rd t
  \<
    \int_0^T
    (
    \|Cx(t)\|_2^2 +
    \|Du(t)\|_2^2
    )
    \rd t\\
\label{ygood1}
  \< &
    \|C\|_{\infty}^2
    \int_0^T
    \|x(t)\|_2^2
    \rd t
    +
    \|D\|_{\infty}^2
    \int_0^T
    \|u(t)\|_2^2
    \rd t,
\end{align}
which is obtained by applying the inequality $\frac{1}{2}\|\alpha + \beta\|^2 \< \|\alpha\|^2 + \|\beta\|^2$ in an arbitrary Hilbert space to the right-hand side of (\ref{y}) and using the boundedness (\ref{Cnorm}) and (\ref{Dnorm}) of the operators $C$ and $D$. Also, (\ref{xgood}) implies that
 \begin{align}
\nonumber
    \|x(t)\|_2^2
    & \<
    2
    \re^{2\|A\|_{\infty}t}
    \Big(
        \|x(0)\|_2^2
        +
    \|B\|_{\infty}^2
        \Big(
    \int_0^t
    \|u(\tau)\|_2
    \rd \tau
    \Big)^2
    \Big)\\
\label{xnormup}
    & \<
    2
    \re^{2\|A\|_{\infty}t}
    \Big(
        \|x(0)\|_2^2
        +
        t
    \|B\|_{\infty}^2
    \int_0^t
    \|u(\tau)\|_2^2
    \rd \tau
    \Big),
\end{align}
where the Cauchy-Bunyakovsky-Schwarz inequality is used again. The local square integrability of $\|x(t)\|_2$ as a function of time $t\>0$ follows from (\ref{xnormup})  in view of (\ref{x0norm}) and (\ref{ugoodT}). In combination with (\ref{ygood1}), this ensures the time-local square integrability of the network output: $  \int_0^T
  \|y(t)\|_2^2\rd t <+\infty$ for any $T>0$. From the last property, (\ref{ugoodT}) and (\ref{Wgood}), it follows that the supply rate $S(t)$ is indeed locally integrable with respect to time.

\section{NETWORK PASSIVITY CONDITIONS IN THE SPATIAL FREQUENCY DOMAIN}\label{sec:pass}

The relations (\ref{Wintpos}) imply that the network is passive 
in the sense that any time-locally square integrable input $u$ in (\ref{ugoodT}) performs a nonnegative work:
\begin{equation}
\label{W}
    W(T)
    :=
  \int_0^T S(t)\rd t
  \>
  0,
  \qquad
  T>0.
\end{equation}
The passivity
can also be considered irrespective of a specific internal energy (storage) function (provided the initial network state is zero). For what follows, the network input $u$ is assumed to be time-space square summable in the sense that
\begin{equation}
\label{ugood1}
    \sn u \sn
    :=
    \sqrt{
    \int_0^{+\infty}
    \|u(t)\|_2^2
    \rd t}
    =
    \sqrt{
    \int_0^{+\infty}
    \sum_{k \in \mZ^\nu}
    |u_k(t)|^2
    \rd t}
    <+\infty,
\end{equation}
where
the norm $\sn \cdot \sn$ is associated with the inner product
$
    \dbra f, g \dket
    :=
    \int_0^{+\infty}
    \bra
        f(t),
        g(t)
    \ket
    =
    \int_0^{+\infty}
    \sum_{k \in \mZ^\nu}
    f_k(t)^* g_k(t)
    \rd t
$
for real or complex vector-valued functions $f(t):= (f_k(t))_{k \in \mZ^\nu}$ and $g(t):=(g_k(t))_{k \in \mZ^\nu}$ of the time and space variables. By the Parseval identity, (\ref{ugood1}) yields
$    \int_{\mR \x \mT^{\nu}}
    |\sU(i\omega, \sigma)|^2
    \rd \omega \rd \sigma
    =
    (2\pi)^{\nu+1}
    \sn u \sn^2
$,
where the function $\sU$ is given by (\ref{cU}). Although (\ref{ugood1}) is a stronger condition than (\ref{ugoodT}), the input $u$ produces the same work $W(T)$ in (\ref{W}) over a given time interval $[0,T]$  as its ``truncated'' version
\begin{equation}
\label{PiT}
\Pi_T(u)(t)
:=
\left\{
\begin{matrix}
u(t) & {\rm if}\ 0\< t\< T\\
0    & {\rm otherwise}
\end{matrix}
\right.
\end{equation}
(with $\Pi_T$ being an orthogonal projection operator for any $T>0$).
This property follows from causality of the network as an input-output operator $u\mapsto y$ and from the fact that the   supply rate $S(t)$ in (\ref{S}) is a bilinear function of $u(t)$ and $y(t)$.
\begin{thm}
\label{th:pass}
Suppose the network, governed by (\ref{xj}) and (\ref{yj}), satisfies (\ref{ABCDgood}) and (\ref{Anorm})--(\ref{Dnorm}), and is endowed with the supply rate (\ref{S}) subject to (\ref{Ggood}) and (\ref{Gnorm}). Also, suppose the matrix $\sA(\sigma)$ in (\ref{ABCD}) is Hurwitz for any $\sigma \in \mT^\nu$. Then the network is passive in the sense of (\ref{W})  for zero initial states and arbitrary time-locally square integrable inputs $u$ in (\ref{ugoodT}) if and only if its
spatio-temporal transfer function (\ref{F}) satisfies
\begin{equation}
\label{FGpos}
  E(\omega, \sigma)
  :=
  F(i\omega,\sigma)^*\sG(\sigma)^*
  +
  \sG(\sigma)F(i\omega,\sigma)
  \succcurlyeq
  0
\end{equation}
for almost all $(\omega,\sigma) \in \mR\x \mT^\nu$. Here, the function $\sG$ is given by (\ref{cG}).
\hfill$\square$
\end{thm}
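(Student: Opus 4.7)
The plan is to recast passivity as nonnegativity of a Hermitian quadratic form on the input's spatio-temporal Fourier transform, via Parseval--Plancherel identities in space and time, and then reduce this to the pointwise matrix inequality (\ref{FGpos}) for its symbol.

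First I would compute the total work $W(\infty):=\lim_{T\to+\infty}W(T)$ in the frequency domain. The spatial Parseval identity already used in (\ref{S}), combined with the temporal Plancherel identity on the half-line and the substitution $\sY(i\omega,\sigma)=F(i\omega,\sigma)\sU(i\omega,\sigma)$ from (\ref{cY}) (valid because $x(0)=0$ and, by the Hurwitz assumption together with (\ref{Anorm})--(\ref{Dnorm}), the transfer function $F$ is essentially bounded on $i\mR\times\mT^\nu$ and $y$ inherits the square time-integrability of $u$), gives
\begin{equation*}
    W(\infty)
    =
    \frac{1}{(2\pi)^{\nu+1}}
    \int_{\mR}\int_{\mT^\nu}
    \sU(i\omega,\sigma)^* \sG(\sigma) F(i\omega,\sigma) \sU(i\omega,\sigma)
    \, d\sigma\, d\omega.
\end{equation*}
Since $S(t)$ is real, this integral equals its own conjugate, and symmetrization yields
\begin{equation*}
    W(\infty)
    =
    \frac{1}{2(2\pi)^{\nu+1}}
    \int_{\mR}\int_{\mT^\nu}
    \sU(i\omega,\sigma)^* E(\omega,\sigma) \sU(i\omega,\sigma)
    \, d\sigma\, d\omega.
\end{equation*}

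For the $(\Leftarrow)$ direction, $E\succcurlyeq 0$ almost everywhere makes this integral nonnegative for every $u\in L^2(\mR_+\times\mZ^\nu;\mR^m)$. For any input obeying (\ref{ugoodT}) and any $T>0$, causality of the network together with the bilinearity of the supply rate (as remarked after (\ref{PiT})) gives $W(T)[u]=W(\infty)[\Pi_T u]$, and the truncation $\Pi_T u$ lies in $L^2(\mR_+\times\mZ^\nu;\mR^m)$, so $W(T)\geq 0$.

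For the $(\Rightarrow)$ direction I argue by contradiction: suppose $E\not\succcurlyeq 0$ on a measurable set $\Omega\subset\mR\times\mT^\nu$ of positive Lebesgue measure. A measurable selection furnishes a unit vector $\xi(\omega,\sigma)\in\mC^m$ with $\xi^* E\xi=\lambda_{\min}(E)<0$ on $\Omega$; by restricting to a symmetric bounded subset of $\Omega$ and enforcing $\xi(-\omega,-\sigma)=\overline{\xi(\omega,\sigma)}$ (consistent with the Hermitian symmetries inherited by $E$ from $\sA,\sB,\sC,\sD,\sG$), I obtain a Hermitian $\psi\in L^2(\mR\times\mT^\nu;\mC^m)$ with strictly negative value of the quadratic form above. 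Let $v(t,k)$ be its real-valued inverse transform in time and space; truncating $v$ to a compact time interval (which perturbs the quadratic form arbitrarily little) and then shifting it by a sufficiently large $\tau>0$ produces a causal input $u(t,k):=v_{\mathrm{trunc}}(t-\tau,k)\in L^2(\mR_+\times\mZ^\nu;\mR^m)$. The shift multiplies the Laplace transform only by the unimodular factor $e^{-i\omega\tau}$, leaving the quadratic form, and hence $W(\infty)[u]$, strictly negative; since $W(T)\to W(\infty)$ as $T\to+\infty$ by the Cauchy--Bunyakovsky--Schwarz bound of (\ref{Wgood}), some finite $T$ yields $W(T)<0$, contradicting passivity.

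The main obstacle is the asymmetry between the frequency-domain space $L^2(\mR\times\mT^\nu;\mC^m)$, on which the quadratic form naturally lives, and the causal Hardy-space image $H^2(\mC_+)\otimes L^2(\mT^\nu;\mC^m)$ into which Laplace transforms of admissible inputs are constrained. The time-shift device, together with a compact-support truncation that perturbs the form arbitrarily little, is what circumvents this gap while preserving the sign of the quadratic form and is the key step of the necessity direction.
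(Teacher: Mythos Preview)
Your proof is correct and follows essentially the same route as the paper: truncate the input via $\Pi_T$, apply the spatial and temporal Parseval identities to express the work as the Hermitian quadratic form $\frac{1}{2(2\pi)^{\nu+1}}\int_{\mR\times\mT^\nu}\sU^* E\,\sU\,\rd\omega\,\rd\sigma$, and read off sufficiency and necessity of (\ref{FGpos}). Your necessity argument is in fact more careful than the paper's (which dispatches it in one sentence), since you explicitly address the causality/Hardy-space constraint on $\sU$ via the compact-support truncation and time-shift device; this is a genuine elaboration rather than a different method.
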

\begin{proof}
Assuming that the network is initialized at zero, consider the work $W(T)$ up until a given time horizon $T>0$. Since an arbitrary admissible input $u$ (in the sense of (\ref{ugoodT})) can be replaced with $\Pi_T(u)$ in (\ref{PiT}) without affecting the work, we will assume, without loss of generality, that $u$ vanishes beyond the time interval $[0,T]$, and hence, so also does the supply rate $S$ in (\ref{S}). For any such input, substitution of (\ref{S}) into (\ref{W}) leads to
\begin{align}
\nonumber
    W(&T)
    =
    \int_0^{+\infty}
    S(t)\rd t
    =
    \dbra
        u, Gy
    \dket\\
\nonumber
    = &
    \frac{1}
    {(2\pi)^{\nu+1}}
    \int_{\mT^\nu}
    \Big(
    \int_{-\infty}^{+\infty}
    \sU(i\omega,\sigma)^*
    \sG(\sigma)
    \sY(i\omega,\sigma)
    \rd \omega
    \Big)
    \rd \sigma\\
\nonumber
    = &
    \frac{1}
    {(2\pi)^{\nu+1}}
    \int_{\mR\x \mT^\nu}
    \sU(i\omega,\sigma)^*
    \sG(\sigma)
    F(i\omega,\sigma)
    \sU(i\omega,\sigma)
    \rd \omega
    \rd \sigma\\
\label{work}
    = &
    \frac{1}
    {2(2\pi)^{\nu+1}}
    \int_{\mR\x \mT^\nu}
    \sU(i\omega,\sigma)^*
    E(\omega,\sigma)
    \sU(i\omega,\sigma)
    \rd \omega
    \rd \sigma,
\end{align}
where the function $F$ in (\ref{F}) is well-defined on the set $(i\mR)\x \mT^\nu$ (with $i\mR$ the imaginary axis) due to the matrix $\sA(\sigma)$ being Hurwitz for all $\sigma \in \mT^\nu$.
Here, in view of  (\ref{sgood}), the Parseval identity is used together with the Fourier transforms (\ref{cY}), (\ref{cU}) and the Hermitian matrix $E(\omega,\sigma) \in \mC^{m\x m}$ in (\ref{FGpos}).
In view of (\ref{work}), the positive semi-definiteness of $E$ almost everywhere in $\mR\x \mT^{\nu}$ implies that $W(T)\> 0$ for all admissible inputs vanishing outside the time interval $[0,T]$. The necessity of the matrix inequality in (\ref{FGpos}) for the network passivity can be obtained by letting $T\to +\infty$ and considering all possible time-space square summable inputs satisfying (\ref{ugood1}).
%
%
\end{proof}

Note that in the case of $r=m$ and the standard supply rate mentioned above,  $\sG$ in (\ref{cG}) is the identity matrix, and (\ref{FGpos}) reduces to
\begin{equation}
\label{Fposreal}
  F(i\omega,\sigma) +
  F(i\omega,\sigma)^*
  \succcurlyeq
  0,
  \qquad
  \omega \in \mR,\
  \sigma \in \mT^\nu,
\end{equation}
which is a network counterpart of the positive real property. 
On the other hand, in an extended setting, the block Toeplitz matrix $G$ can be formed from differential operators $G_\ell(\d_t)$ with respect to time (whose entries  are, for example, polynomials of $\d_t$). In this case, Theorem~\ref{th:pass} is modified by replacing the function $\sG$ in (\ref{FGpos}), given by (\ref{cG}), with the Fourier-Laplace transform
\begin{equation}
\label{cGgen}
    \sG(s,\sigma)
    :=
    \sum_{\ell\in \mZ^\nu}
    \re^{-i\ell^{\rT}\sigma}
    G_\ell(s).
\end{equation}
In particular, if $r=m$ and the extended operator $G$ acts on the network output as $Gy = \dot{y}$, then (\ref{cGgen}) yields $\sG(s,\sigma) = sI_m$. In accordance with the structure  of the supply rate (\ref{S}),  this describes the setting when $u$ and $y$ consist of the corresponding force and position (rather than velocity) variables.  In this case,
 (\ref{FGpos})  takes the form
\begin{equation}
\label{Fnegimag0}
  i\omega (F(i\omega,\sigma) -
  F(i\omega,\sigma)^*)
  \succcurlyeq
  0,
  \qquad
  \omega \in \mR,\
  \sigma \in \mT^\nu.
\end{equation}
Since the spatio-temporal transfer function $F$ in (\ref{F}) satisfies $F(-i\omega, -\sigma) = \overline{F(i\omega,\sigma)}$ for all $\omega \in \mR$, $\sigma \in \mR^\nu$, and the complex conjugation of a Hermitian matrix preserves positive semi-definiteness, then  (\ref{Fnegimag0}) is equivalent to
\begin{equation}
\label{Fnegimag}
    \frac{1}{i}
  (F(i\omega,\sigma) -
  F(i\omega,\sigma)^*)
  \preccurlyeq
  0,
  \qquad
  \omega \in \mR_+,\
  \sigma \in \mT^\nu.
\end{equation}
Similarly to (\ref{Fposreal}), the condition (\ref{Fnegimag}) is a network version of the negative-imaginary property \cite{PL_2010,XPL_2010}.
%
%
%
%
In the case when the matrix $\sA(\sigma)$ in (\ref{ABCD}) is Hurwitz for all $\sigma \in \mT^\nu$ (as assumed in Theorem~\ref{th:pass}), there exists a positive definite Hermitian $\mC^{n\x n}$-valued function $\sV$ in (\ref{cV}) on the torus $\mT^\nu$ satisfying (\ref{Vsymm}), (\ref{Vnorm}) and such that
\begin{equation}
\label{ALI}
    \sA(\sigma)^* \sV(\sigma) +\sV(\sigma) \sA(\sigma) \prec 0,
    \qquad
    \sigma \in \mT^\nu.
\end{equation}
In isolation from the environment (when $u=0$), the quadratic stability of the network can be formulated by enhancing the positive definiteness as
\begin{equation}
\label{Vgood}
    \mu
    :=
    \essinf_{\sigma \in \mT^\nu}
    \lambda_{\min}(\sV(\sigma))
    >0.
\end{equation}
Then
the corresponding block Toeplitz matrix $V$ describes a positive definite self-adjoint operator on $L^2(\mZ^\nu, \mR^n)$ whose inverse $V^{-1}$ is also     such an operator and its norm is related to the quantity $\mu$ in (\ref{Vgood}) by
\begin{equation}
\label{Vinvnorm}
    \|V^{-1}\|_{\infty}
    =
    \frac{1}{\mu}.
\end{equation}
Furthermore, the Hamiltonian $H$, associated with $V$ by (\ref{H}), admits the following bounds  in terms of the standard $L^2$-norm of the network state:
\begin{equation}
\label{Hbounds}
    \frac{\mu}{2}
    \|x(t)\|_2^2
    \<
    H(t)
    \<
    \frac{\|V\|_{\infty}}{2}
    \|x(t)\|_2^2.
\end{equation}
Similarly to finite-dimensional settings,
the dissipativity (\ref{diss}) (or a nonstrict version of the inequality  (\ref{ALI})) implies that the Hamiltonian of the isolated network does not increase with time (that is, $\dot{H}\< 0$), which, in combination with (\ref{Hbounds}) leads to
\begin{equation}
\label{cond}
    \|x(t)\|_2^2
    \<
    \frac{2}{\mu} H(t)
    \<
    \frac{2}{\mu} H(0)
      \< \frac{\|V\|_\infty}{\mu} \|x(0)\|_2^2,
    \quad
    t\>0
\end{equation}
(similar bounds, arising from quadratic Hamiltonians, are also used for quantum harmonic oscillators \cite[Eq. (22)]{P_2014}).
In view of (\ref{Vinvnorm}), the factor
$
    \frac{\|V\|_\infty}{\mu} = \|V\|_{\infty}\|V^{-1}\|_{\infty}
    \> 1
$
on the right-hand side of (\ref{cond}) is the condition number of the operator $V$ (which quantifies how far $V$ is from scalar operators).

\section{PHONON THEORETIC
DISPERSION RELATIONS}\label{sec:phon}

We will now consider a class of translation invariant networks, which, in isolation from the environment, manifest phonon-like dynamics mentioned in Section~\ref{sec:freq}. Suppose the state vectors  $x_k$  of the component systems consist of the conjugate position and momentum variables which are assembled  into $\frac{n}{2}$-dimensional vectors $q_k$ and $p_k$, respectively (with $n$ being even):
\begin{equation}
\label{xqp}
    x_k
    =
    \begin{bmatrix}
        q_k\\
        p_k
    \end{bmatrix},
    \qquad
    k \in \mZ^\nu.
\end{equation}
Accordingly, the network is assumed to have the following Hamiltonian:
\begin{equation}
\label{Hqp}
    H
    :=
    \frac{1}{2}
    \sum_{j\in \mZ^\nu}
    \Big(
        \|p_j\|_{M^{-1}}^2
        +
        q_j^{\rT}
        \sum_{k\in \mZ^\nu}
        K_{j-k}
        q_k
    \Big).
\end{equation}
Here, $M$ is a real positive definite symmetric mass matrix of order $\frac{n}{2}$, and use is made of a weighted Euclidean norm $\|v\|_L := \sqrt{v^{\rT} L v} = |\sqrt{L}v|$ of a vector $v$ specified by such a matrix $L$. Also,
$K:=(K_{j-k})_{j,k\in \mZ^\nu}$ is a symmetric block Toeplitz stiffness operator whose blocks $K_{\ell} \in \mR^{\frac{n}{2}\x \frac{n}{2}}$ specify the potential energy part of the Hamiltonian. In view of (\ref{xqp}) and (\ref{Hqp}),  the blocks of the corresponding
block Toeplitz matrix $V$ in (\ref{H}) are given by
\begin{equation}
\label{VKM}
    V_{\ell}
    =
    \begin{bmatrix}
        K_\ell & 0\\
        0 & \delta_{0\ell}M^{-1}
    \end{bmatrix},
    \qquad
    \ell \in \mZ^\nu,
\end{equation}
where
$\delta_{jk}$ is the Kronecker delta. In application to the isolated network, the Hamiltonian equations of motion take the form
\begin{equation}
\label{qpdot}
    \dot{x}_j
    =
    \begin{bmatrix}
        \dot{q}_j\\
        \dot{p}_j
    \end{bmatrix}
    =
    J \d_{x_j}H
    =
    \begin{bmatrix}
        \d_{p_j}H\\
        -\d_{q_j}H
    \end{bmatrix}
    =
    \begin{bmatrix}
        M^{-1} p_j\\
        -\sum_{k\in \mZ^\nu}
    K_{j-k} q_k
    \end{bmatrix}
\end{equation}
for all $j\in \mZ^\nu$,
where
\begin{equation}
\label{J}
    J:=
    \begin{bmatrix}
        0 & 1 \\
        -1 & 0
    \end{bmatrix}
    \ox I_{n/2}
\end{equation}
is the symplectic structure matrix associated with the partitioning (\ref{xqp}), with $\ox$ the Kronecker product of matrices. A comparison of (\ref{qpdot}) with (\ref{xj}) allows the blocks of the matrix $A$ in (\ref{x}) to be expressed in terms of (\ref{VKM}) and (\ref{J}) as
\begin{equation}
\label{AKM}
  A_{\ell}
  =
  J V_{\ell}
  =
    \begin{bmatrix}
        0 & \delta_{0\ell}M^{-1}\\
        -K_\ell & 0
    \end{bmatrix},
    \qquad
    \ell \in \mZ^\nu.
\end{equation}
Upon substituting (\ref{AKM}) into (\ref{ABCD}),
the function $\sA$ takes the form
\begin{equation}
\label{sAKM}
  \sA(\sigma)
  =
    \begin{bmatrix}
        0 & M^{-1}\\
        -\sK(\sigma) & 0
    \end{bmatrix},
\end{equation}
where
\begin{equation}
\label{sK}
    \sK(\sigma)
    :=
    \sum_{\ell\in \mZ^\nu}
    \re^{-i\ell^{\rT}\sigma}
    K_\ell.
\end{equation}
Since $M\succ 0$, then $s\in \mC$ is an eigenvalue of the matrix $\sA(\sigma)$ in (\ref{sAKM}) if and only if $s^2$ is an eigenvalue of $-M^{-1/2}\sK(\sigma)M^{-1/2}$. Therefore, if the stiffness operator $K$ is positive semi-definite (so that $0$ delivers a minimum to the potential energy), then $\sA(\sigma)$ has a purely imaginary spectrum:
\begin{equation}
\label{spec}
    \fS(\sA(\sigma))
    =
    \big\{
        \pm i\sqrt{\lambda}:\
        \lambda \in \fS(\sK(\sigma)M^{-1})
    \big\}
\end{equation}
for any $\sigma \in \mT^\nu$. Such a network is organised as an infinite-dimensional spring-mass system whose vibrations are linear superpositions of ``basis'' phonons given by (\ref{phon}) with $s=\pm i \sqrt{\lambda}$ for different values of $\sigma \in \mT^\nu$. The matrix $\sK(\sigma)M^{-1} = \sqrt{M}M^{-1/2}\sK(\sigma)M^{-1/2}M^{-1/2}$ in (\ref{spec}),  which is isospectral to $M^{-1/2}\sK(\sigma)M^{-1/2}\succcurlyeq 0$, corresponds to the stiffness-to-mass ratio for one-mode harmonic oscillators. The dynamic properties of phonons with large wavelengths $\frac{2\pi}{|\sigma|}$ (manifesting themselves in highly correlated motions of distant subsystems in the network) depend on the asymptotic behaviour of the matrix $\sK(\sigma)$ in (\ref{sK}) in a small neighbourhood of $\sigma = 0$.


\begin{thm}
\label{th:phon}
Suppose the isolated translation invariant network (\ref{qpdot}) with the Hamiltonian (\ref{Hqp}) has a positive semi-definite stiffness operator $K$ satisfying
\begin{align}
\label{K0}
    \sum_{\ell \in \mZ^\nu}
    K_{\ell} & = 0,\\
\label{K2}
  \sum_{\ell \in \mZ^\nu}
  |\ell|^2
  \|K_\ell\|
  & <+\infty.
\end{align}
Then the phase velocities of the phonons in the network are uniformly bounded:
\begin{equation}
\label{speed}
  \sup
  \Big\{
        \frac{\omega}{|\sigma|}:\
        \sigma
        \in
        \mT^\nu\setminus\{0\},\,
        \omega
        \in
        \mR_+,\,
        i\omega
        \in
        \fS(\sA(\sigma))
  \Big\}
  <+\infty.
\end{equation}
\hfill$\square$
\end{thm}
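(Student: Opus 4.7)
\medskip

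\emph{Plan.} The strategy is to show that the stiffness symbol $\sK$ in (\ref{sK}) vanishes to second order at the origin, uniformly in a quantitative way, and then combine this with the spectral representation (\ref{spec}).

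\emph{Step 1: vanishing of $\sK$ to second order at $\sigma=0$.} By (\ref{K0}), $\sK(0)=\sum_\ell K_\ell = 0$. Since $K\succcurlyeq 0$ is equivalent to $\sK(\sigma)\succcurlyeq 0$ for all $\sigma\in\mT^\nu$, the real-valued map $\sigma\mapsto v^*\sK(\sigma)v$ is non-negative for every $v\in\mC^{n/2}$, with minimum value zero attained at $\sigma=0$. The moment bound (\ref{K2}) implies $\sum_\ell|\ell|\,\|K_\ell\|<\infty$, so $\sK$ is $C^1$ near the origin and $\nabla_\sigma v^*\sK(0)v=0$ for every $v$. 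Because $\partial_{\sigma_j}\sK(0) = -i\sum_\ell \ell_j K_\ell$ is Hermitian (it arises from differentiating a Hermitian-valued function), this implies $\partial_{\sigma_j}\sK(0)=0$, i.e., $\sum_\ell \ell_j K_\ell = 0$ for each $j=1,\ldots,\nu$.

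\emph{Step 2: quadratic bound for $\sK$.} Using (\ref{K0}) and the identities $\sum_\ell \ell_j K_\ell=0$ from Step~1, rewrite
\begin{equation*}
    \sK(\sigma)
    =
    \sum_{\ell\in\mZ^\nu}
    \big(\re^{-i\ell^\rT\sigma} - 1 + i\ell^\rT\sigma\big) K_\ell.
\end{equation*}
The elementary scalar bound $|\re^{-ix}-1+ix|\<\tfrac{x^2}{2}$ (Taylor remainder) together with $|\ell^\rT\sigma|\<|\ell|\,|\sigma|$ gives
\begin{equation*}
    \|\sK(\sigma)\|
    \<
    \frac{|\sigma|^2}{2}\sum_{\ell\in\mZ^\nu} |\ell|^2\,\|K_\ell\|
    =:
    C|\sigma|^2,
    \qquad
    \sigma\in\mT^\nu,
\end{equation*}
where $C<+\infty$ by (\ref{K2}).

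\emph{Step 3: deducing the phase-velocity bound.} By (\ref{spec}), every frequency $\omega\>0$ with $i\omega\in\fS(\sA(\sigma))$ satisfies $\omega^2=\lambda$ for some $\lambda\in\fS(\sK(\sigma)M^{-1})$. The matrix $\sK(\sigma)M^{-1}$ is similar, via $M^{1/2}(\cdot)M^{-1/2}$, to the Hermitian positive semi-definite matrix $M^{-1/2}\sK(\sigma)M^{-1/2}$. Hence $\lambda\>0$ and
\begin{equation*}
    \lambda
    \<
    \|M^{-1/2}\sK(\sigma)M^{-1/2}\|
    \<
    \|M^{-1}\|\,\|\sK(\sigma)\|
    \<
    C\|M^{-1}\|\,|\sigma|^2,
\end{equation*}
so $\omega\<\sqrt{C\|M^{-1}\|}\,|\sigma|$ and thus $\omega/|\sigma|\<\sqrt{C\|M^{-1}\|}$ for all $\sigma\in\mT^\nu\setminus\{0\}$, which is (\ref{speed}).

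\emph{Main obstacle.} The only nontrivial point is Step~1: extracting the vanishing of the first-order Taylor coefficient of $\sK$ at $\sigma=0$. The intuition is that (\ref{K0}) forces $\sigma=0$ to be a minimum of each quadratic form $v^*\sK(\sigma)v$, and positivity of $K$ plus differentiability (guaranteed by (\ref{K2})) promote this to the vanishing of the gradient. Once this structural fact is in place, the rest reduces to the Taylor bound $|\re^{-ix}-1+ix|\<x^2/2$ and the elementary spectral observations used above.
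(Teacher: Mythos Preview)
Your proof is correct and shares its first step with the paper's own argument: both deduce $\sK(0)=0$ from (\ref{K0}) and then use the positive semi-definiteness of $\sK$ together with differentiability to force the first Frechet derivative $\sK'(0)$ to vanish. From there the two approaches diverge. The paper invokes the $C^2$ smoothness of $\sK$ and writes the quadratically truncated Taylor expansion with an $o(|\sigma|^2)$ remainder; since this only controls $\sK(\sigma)$ near $\sigma=0$, the paper handles $\sigma$ bounded away from the origin separately by appealing to continuity of the spectrum on the compact torus, and then combines the two regimes. Your Step~2 instead exploits the explicit second-order remainder bound $|\re^{-ix}-1+ix|\<x^2/2$ to obtain a \emph{global} estimate $\|\sK(\sigma)\|\<C|\sigma|^2$ for all $\sigma\in\mT^\nu$ at once; this eliminates the case split and the spectral-continuity argument, and delivers the explicit phase-velocity bound $\sqrt{C\|M^{-1}\|}$ with $C=\tfrac{1}{2}\sum_\ell|\ell|^2\|K_\ell\|$. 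The paper's route, on the other hand, isolates the second Frechet derivative $\Gamma_{jk}$ and the directional limit in its (\ref{speed0}), which gives more structural information about the long-wave dispersion (the limiting speed as a function of direction $\theta$) at the cost of a slightly longer argument.
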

\begin{proof}
The condition (\ref{K2}) ensures twice continuous differentiability of the function $\sK$ in (\ref{sK}).
Since $K \succcurlyeq 0$ is equivalent to $\sK(\sigma)\succcurlyeq 0$ for all $\sigma \in \mT^\nu$, then (\ref{K0}) implies that $\sK(0)=0$. Hence, $\sK$ has zero Frechet derivative at the origin: $\sK'(0) = 0$ (that is, $\sum_{\ell \in \mZ^\nu}\ell \ox K_{\ell} = 0$). Indeed, otherwise, the leading term of the linearly truncated Taylor series expansion $\sK(\sigma) = \sum_{j=1}^\nu \sigma_j \d_{\sigma_j}\sK(\sigma)\big|_{\sigma = 0} +o(|\sigma|)$ would fail  to be a positive semi-definite matrix for arbitrary $\sigma:= (\sigma_k)_{1\< k\<\nu}$. Therefore, the quadratically truncated Taylor series expansion of $\sK$ in a small neighbourhood of the origin reduces to
\begin{align}
\nonumber
    \sK(\sigma)
    & =
    \frac{1}{2}
    \sum_{j,k=1}^\nu
    \sigma_j \sigma_k
    \Gamma_{jk}
    +
    o(|\sigma|^2)\\
\label{sKquad}
    & =
    |\sigma|^2
    \left(
    \frac{1}{2}
    \sum_{j,k=1}^\nu
    \theta_j \theta_k
    \Gamma_{jk}
    +
    o(1)
    \right),
    \quad
    {\rm as}\ \sigma \to 0.
\end{align}
Here, we have associated with $\sigma\ne 0$ the unit direction vector     $\theta
    :=
    (\theta_k)_{1\< k\< \nu}$ by
\begin{equation}
\label{theta}
    \theta
    :=
    \frac{\sigma}{|\sigma|},
\end{equation}
and the matrices $\Gamma_{jk} \in \mR^{\frac{n}{2}\x \frac{n}{2}}$ specify the second Frechet derivative of the function $\sK$  as
\begin{equation}
\label{Gammajk}
    \Gamma_{jk}
    :=
    \d_{\sigma_j}\d_{\sigma_k}
    \sK(\sigma)
    \big|_{\sigma = 0}
    =
    -
    \sum_{\ell \in \mZ^\nu}
    \ell_j\ell_k
    K_{\ell}
\end{equation}
in view of (\ref{sK}). The leading term     $\sum_{j,k=1}^\nu
    \sigma_j \sigma_k
    \Gamma_{jk}$ in (\ref{sKquad}) is a real positive semi-definite symmetric matrix of order $\frac{n}{2}$ for any $\sigma \in \mR^\nu$. Now, since the spectrum (as a set-valued function) of a matrix depends continuously on it \cite{H_2008,HJ_2007}, then, in view of (\ref{spec}),
\begin{equation}
\label{ommax}
  \sup_{\sigma \in \mT^\nu}
  \br(\sA(\sigma))
  =
  \sqrt{\max_{\sigma \in \mT^\nu}
  \lambda_{\max}(\sK(\sigma)M^{-1})}
  <+\infty,
\end{equation}
where $\lambda_{\max}(\cdot)$ is the largest eigenvalue of a matrix with a real spectrum. Hence, the left-hand side of (\ref{speed}) can be unbounded only because of the asymptotic behaviour of $\br(\sA(\sigma))$ as $\sigma \to 0$  (which corresponds to long-wave phonons). However, it follows from (\ref{spec}) and (\ref{sKquad})--(\ref{Gammajk}) that
\begin{align}
\nonumber
  &\limsup_{\sigma\to 0}
  \frac{\br(\sA(\sigma))}{|\sigma|}
  =
  \limsup_{\sigma\to 0}
  \frac{\sqrt{\lambda_{\max}(\sK(\sigma)M^{-1})}}{|\sigma|}  \\
\label{speed0}
  & =
  \max_{\theta\in \mR^\nu:\ |\theta| = 1}
  \sqrt{
  \frac{1}{2}
  \lambda_{\max}
  \Big(
      \sum_{j,k=1}^\nu
    \theta_j \theta_k
    \Gamma_{jk}
    M^{-1}
    \Big)
  }
   <+\infty.
\end{align}
A combination of (\ref{ommax}) and (\ref{speed0}) leads to
(\ref{speed}).
\end{proof}

The speed of propagation of disturbances is described in physics literature  in terms of their group velocity $\d_{\sigma}\omega$ (see, for example, \cite{LL_1980}). Note that the differentiability  is ensured for those phonons whose frequency $\omega$ corresponds to a non-degenerate  eigenvalue (of multiplicity one) of the matrix $\sK(\sigma)M^{-1}$ in (\ref{spec}).
%
%
%
%
%
We will now provide an example which illustrates the condition (\ref{K0}).
Consider
the movement of a thin isotropic plate of density $\rho>0$ and bending stiffness $\beta>0$, governed by the Kirchhoff-Love equation \cite{R_2007}:
\begin{equation}
\label{PDE}
    \rho\d_t^2 w = -\frac{\beta}{2} \Delta^2 w +f.
\end{equation}
Here, $w(t,\xi)$ is the local deviation of the plate from the equilibrium (unbent) position, which is a real-valued  function of time $t$ and the two-dimensional vector $\xi:= {\small\begin{bmatrix}\xi_1\\ \xi_2\end{bmatrix}} \in \mR^2$ of Cartesian coordinates on the plane, and $\Delta:= \d_{\xi_1}^2 + \d_{\xi_2}^2$ is the Laplacian over the spatial variables. Also, $f(t,\xi)$ is the density of the external force acting transversally on the plate. A finite-difference scheme  for the numerical solution of the PDE (\ref{PDE}) with spatial step size $h>0$ can be viewed as a translation invariant network on the two-dimensional integer lattice $\mZ^2$. By denoting $\mho:= (\mho_{jk})_{j,k\in \mZ}$ the spatial discretization of the solution of (\ref{PDE}) at time $t\>0$ (so that $\mho_{jk}(t)$ approximates $w(t,jh,kh)$), the corresponding functions
\begin{equation}
\label{qpmho}
    q:= \mho,
    \qquad
    p:= \rho \d_t \mho
\end{equation}
on the set $\mR_+\x \mZ^2$
are evolved according to the Hamilton equations
\begin{equation}
\label{qpmhodot}
    \dot{q}
    =
    \frac{1}{\rho} p,
    \qquad
    \dot{p}
    =
    -\frac{\beta}{2}L^2 q + g.
\end{equation}
Here, $g:= (f(t,jh,kh))_{j,k\in \mZ}$ is a discretization of the forcing term,
and $L$ is a self-adjoint Toeplitz operator which approximates the Laplacian $\Delta$ as
\begin{equation}
\label{Lmho}
    (L\mho)_{jk}
    :=
    \frac{1}{h^2}
    (
        q_{j+1,k}+q_{j-1,k}
        +
        q_{j,k+1}+q_{j,k-1}
        -
        4q_{jk}
    ).
\end{equation}
A comparison of (\ref{qpmho}) and (\ref{qpmhodot}) with (\ref{qpdot}) shows that the scalar $\rho$ plays the role of the mass matrix $M$, and $\frac{\beta}{2}L^2$ is the stiffness operator $K$. The latter implies that the Fourier transforms (\ref{sK}) and $
    \sL(\sigma)
    :=
    \sum_{\ell\in \mZ^2}
    \re^{-i\ell^{\rT}\sigma}
    L_\ell
$ are related by
$\sK = \frac{\beta}{2}\sL^2$, and hence, $K$ inherits the property (\ref{K0}) from the discretized Laplacian $L$ in (\ref{Lmho}), since $\sL(0) = \sum_{\ell \in \mZ^2} L_\ell = 0$.

\section{CONCLUSION}
\label{sec:conc}

We have considered a class of translation invariant networks of interacting linear systems at sites of a multidimensional lattice.
Energy-based input-output properties of such a network (including passivity, positive realness and the negative imaginary property)  have been considered in terms of its transfer function representation in the spatio-temporal frequency domain. We have also discussed quadratic stability of the network in the isolated regime, 
and dispersion relations for phonons in the isolated network. The results of the paper are applicable to quadratic regulation problems for large flexible structures with a sparse collocation of sensors and actuators.


\begin{thebibliography}{99}






\bibitem{H_2008}
N.J.Higham,
\emph{Functions of Matrices}, SIAM, 2008.
\bibitem{HJ_2007}
R.A.Horn, and C.R.Johnson,
\textit{Matrix Analysis},
Cambridge
University Press, New York, 2007.

%
\bibitem{LL_1980}
L.D.Landau, and E.M.Lifshitz, \textit{Statistical Physics},  3rd Ed.,  Butterworth-Heinemann, Oxford, 1980.
%
%
%
\bibitem{OVMM_2001}
R.Ortega, A.J.van der Schaft, I.Mareels, and B.Maschke,  Putting energy back in control,
 \textit{IEEE Contr. Sys.},
vol. 21, no. 2, pp. 18--33, 2001.
\bibitem{OVME_2002}
R.Ortega, A.van der Schaft, B.Maschke, and G.Escobar, Interconnection and damping assignment passivity-based control
of port-controlled Hamiltonian systems, \textit{Automatica}, vol. 38, no. 4, 2002, pp. 585--596.



\bibitem{P_2014}
I.R.Petersen,
A direct coupling coherent quantum observer,
2014 IEEE Conference on Control Applications (CCA),
8--10 October 2014 (arXiv:1408.0399v2 [quant-ph], 8 August 2014).

\bibitem{PL_2010}
I.R.Petersen, and A.Lanzon,
Feedback control of negative-imaginary systems,
\textit{IEEE Contr. Sys.}, vol. 30, no. 5, 2010, pp. 54--72.



\bibitem{R_2007}
J.N.Reddy, \textit{Theory and Analysis of Elastic Plates and Shells}, 2nd Ed., CRC Press, Taylor and Francis, London, 2007.
%


\bibitem{SVP_2015}
A.Kh.Sichani, I.G.Vladimirov, and I.R.Petersen,
Covariance dynamics and entanglement in translation invariant linear quantum stochastic networks
IEEE 54th Annual Conference on Decision and Control (CDC),
December 15-18, 2015,  Osaka, Japan,
pp. 7107--7112.

%

%
\bibitem{S_1990}
G.P.Srivastava, \emph{The Physics of Phonons}, Taylor \& Francis, 1990.
\bibitem{VJ_2014}
A. van der Schaft, and D. Jeltsema,  Port-Hamiltonian Systems Theory: An Introductory
Overview, \emph{ Foundations and Trends
R in Systems and Control}, vol. 1, no. 2--3, 2014,  pp. 173--378.
\bibitem{VBSH_2006}
V.Veselago, L.Braginsky, V.Shklover, and C.Hafner,
Negative refractive index metamaterials,
\emph{J. Comp. Theor. Nanosci.}, vol. 3, 2006, pp. 1--30.

%

\bibitem{VP_2014}
I.G.Vladimirov, and I.R.Petersen,
Physical realizability and mean square performance of translation
invariant networks of interacting linear quantum stochastic systems,
21st
Int. Symp. Math. Theor. Networks Syst.,
Groningen, The Netherlands,
July 7--11, 2014,
pp.  1881--1888.

\bibitem{W_1972}
J.C.Willems, Dissipative dynamical systems. Part I: general
theory, Part II: linear
systems with quadratic supply rates, \emph{Arch. Rational Mech. Anal.}, vol. 45, no. 5, 1972, pp.
321--351, 352--393.

\bibitem{XPL_2010}
J.Xiong, I.R.Petersen, and A.Lanzon,
A negative imaginary lemma and the stability of
interconnections of linear negative imaginary systems,
\textit{IEEE Trans. Automat. Contr.},
vol. 55, no. 10, 2010, pp. 2342--2347.

\end{thebibliography}
\end{document}